\newcommand{\Sminus}{S^{-}}
\newcommand{\Away}{{\tt A}}
\newcommand{\Home}{{\tt H}}
\newcommand{\UnHA}{{\tt *}}
\newcommand{\HA}{1}
\newcommand{\AH}{2}
\newcommand{\OCTmap}{\alpha}
\newcommand{\myalpha}[1]{#1}
\newcommand{\bV}{\hat{V}}
\newcommand{\bE}{\hat{E}}
\newcommand{\minB}{B_{\min}}
\newcommand{\minO}{O_{\min}}
\newtheorem{THEO}{Theorem}[section]
\newtheorem{LEMM}[THEO]{Lemma}
\newtheorem{CORO}[THEO]{Corollary}
\normalfont\fontsize{12}{15}\bfseries}{\thesection}{1em.}{}
\let\oldbibliography\thebibliography
\renewcommand{\thebibliography}[1]{%
  \oldbibliography{#1}%
  \setlength{\itemsep}{-2pt}%
}
\begin{document}

\title{Minimizing breaks by minimizing odd cycle transversals
\thanks{
This work was supported by JSPS KAKENHI Grant Number JP20K04973. 
}
}

\author{Koichi Fujii
\thanks{Department of Industrial Engineering and Economics, 
   Tokyo Institute of Technology 
}
       \and
        Tomomi Matsui
\footnotemark[2]
}

\iffalse
\author{Koichi Fujii
\thanks{
  Department of Industrial Engineering and Economics, 
  Institute of Science Tokyo
  }
  \and
  Tomomi Matsui
\footnotemark[2]
}
\fi

\maketitle

\begin{abstract}
 \noindent
 Constructing a suitable schedule for sports competitions is
  a crucial issue in sports scheduling.
The round-robin tournament is a competition adopted 
  in many professional sports.
For most round-robin tournaments,
  it is considered undesirable that
  a team plays consecutive away or home matches;
  such an occurrence is called a break.
Accordingly,
  it is preferable to reduce the number of breaks in a tournament.
A common approach is first to construct a schedule
  and then determine a home-away assignment
  based on the given schedule
  to minimize the number of breaks (first-schedule-then-break).
In this study, 
  we concentrate on the problem
  that arises in the second stage
  of the first-schedule-then-break approach, namely,
  the break minimization problem~(BMP).
We prove that
  this problem can be reduced to an odd cycle transversal problem,
  the well-studied graph problem.
These results lead to a new approximation algorithm for the BMP.  
 %Write the abstract here.  If possible, please avoid writing mathematical formula in the abstract. 

 \noindent
 {\bf Keywords:} 
 sports scheduling; tournament scheduling; round robin;
 odd cycle transversal. \\
 \end{abstract}

\section{Introduction}

Constructing a suitable schedule for sports competitions is
  a crucial issue in sports scheduling.
A  (single)  {\it round-robin tournament} (RRT) is a simple sports competition,
  where each team plays against every other team once. 
The RRT is a competition framework adopted 
  in many professional sports
  such as soccer and basketball,
  especially in Europe.
In this study, 
  we consider an RRT schedule with the following properties:

\begin{itemize}
\item Each team plays one match in each \textit{slot},
   that is the day when a match is held.
\item Each team has a home stadium, and each match is held 
  at the home stadium of one of the two playing teams.
  When a team plays a match at the home of the opponent, 
  we state that the team plays away.
\item Each team plays each other team exactly once at home or away. 
\end{itemize}    

It is considered undesirable that
  a team plays consecutive away or home matches
  to reduce player's burden;
  such an occurrence is called a {\it break}.

Accordingly,  
  the number of breaks in a tournament should be reduced.
To construct a tournament schedule with fewer breaks,
  the following two decomposition approaches are widely used:
  (i) first-break-then-schedule~\cite{
    nemhauser1998scheduling,
    briskorn2008feasibility,
    miyashiro2002characterizing,
    henz2001scheduling,
    van2022optimizing,
    zeng2012separation,
    van2020complexity}
  and (ii) first-schedule-then-break~\cite{
    trick2001schedule,
    nemhauser1998scheduling,
    miyashiro2005polynomial,
    miyashiro2006semidefinite,
    post2006sports,
    rasmussen2006timetable}.
In the first-break-then-schedule approach, 
    an home-away assignment~(HA-assignment) is generated in the first stage,  
    which assigns a home game or an away game to each team
    in each slot.
In the second stage,
  a timetable consistent with the generated HA-assignment
  is found,
  if it exists.
Contrarily, in the first-schedule-then-break approach,
  a timetable is constructed at the first stage.
The second stage determines the home and away teams of each match.
In this approach, 
  the home advantage is further determined in the second stage.
For detailed information,  
  refer to~\cite{Rasmussen_2008}.
In this study,
  we focus on a problem
  occurring in the second stage 
  of the first-schedule-then-break approach.
This problem is commonly called the break minimization problem~(BMP),
  which determines an \mbox{HA-assignment} minimizing the number of breaks.

The research on the BMP,
  introduced by the seminal work of
  de Werra~\cite{de1981scheduling},
  has steadily evolved.
%The break minimization problem (BMP) has been %widely studied~\cite{
%    de1981scheduling,  
%    miyashiro2006semidefinite,
%    trick2001schedule}.
%  \cite{de1981scheduling},  
%  \cite{miyashiro2006semidefinite},
%  ~\cite{trick2001schedule}.
By $B_{\min}(\tau)$, 
  we denote the minimum total number of breaks
  over all possible HA-assignments
  for a given timetable $\tau$ of an RRT with $2n$ teams.
For the lower bound of $B_{\min}(\tau)$,
  de Werra~\cite{de1981scheduling} demonstrated that
  any schedule of a round-robin tournament has
  at least $2n- 2$ breaks.
For the upper bound,
 Miyashiro and Matsui~\cite{miyashiro2005polynomial} 
 proposed an algorithm 
 for finding an HA-assignment satisfying the condition
 that the number of breaks 
 is less than or equal to $n(n-1)$ for a given timetable.
Post and Woeginger~\cite{post2006sports} revised this analysis and 
    improved the upper bound of the number of breaks from  $n(n-1)$
    to $(n-1)^2$ if the number of teams $2n$ is not 
    a multiple of~$4$. 

\iffalse
Miyashiro and Matsui~\cite{miyashiro2005polynomial}
  have proposed a polynomial time algorithm for deciding
  whether there exists an HA-assignment with exactly $2n-2$ breaks
  for a given timetable.
They also demonstrated the equivalence of 
  the BMP and the break maximization problem.
\fi

Elf, J\"{u}nger, and Rinaldi~\cite{ELF2003343} conjectured that
  the problem of finding $B_{\min}(\tau)$ is NP-hard,
  which has not been proven yet to the best of our knowledge.
Miyashiro and Matsui~\cite{miyashiro2006semidefinite} formulated
  BMP as MAX RES CUT.
They also applied an approximation algorithm for MAX RES CUT based on the semi-definite programming relaxation,
  which was proposed by Goemans and Williamson~\cite{goemans1995improved}.

\iffalse
R{\'e}gin~\cite{regin2000minimization} proposed
  a constraint programming model for BMP,
  which was able to solve instances
  containing up to 20 teams. 
Trick~\cite{trick2001schedule} proposed an ILP formulation
  for BMP in an RRT,
  which was able to solve instances containing up to 22 teams.
Van Hentenryck and Vergados~\cite{van2005minimizing} proposed
  a simulated annealing algorithm for BMP
  to rapidly obtain near-optimal solutions.
Elf, J\"{u}nger, and Rinaldi~\cite{ELF2003343} showed
  that BMP could be transformed into a maximum cut problem (MAX CUT);
  their proposed methods were able to solve instances of up to 26 teams. 
\fi

In this paper,
  we consider the odd cycle transversal (OCT) problem,
  which is the problem of deleting vertices to transform a graph into a bipartite graph.

%A graph is bipartite
%  if and only if it has no odd cycles.
After the breakthrough paper of Reed, Smith, and Vetta~\cite{reed2004finding},
  various parameterized algorithms have been investigated~\cite{
    lokshtanov2009simpler, 
    kawarabayashi2010almost,
    jansen2011polynomial,
    lokshtanov2014faster, 
    iwata2014linear,
    kolay2020faster
  }.
Not only are there theoretical results,
  but there are also computational results
  for both the exact and heuristic approaches~\cite{
    huffner2009algorithm,
    akiba2016branch
    }.
%The OCT problem is well-studied,
%  and has been considered from the perspective of
%    exact exponential time [RSS07] 
%    parameterized parameterized algorithms [KR02, LNR+14, RSV04, LSS09, JK11, KMRS20, IOY14, KR10].
We will explore the relationship
  between the BMP and OCT problems and consequently propose a novel approximation algorithm of the BMP.
    
%%%%%%%%%%%%%%%%%%%%%%%%%%%%%%%
%% Notation
\section{Preliminaries}

\subsection{Break Minimization Problem}
\label{subsection:BMP}
In this section, we introduce some notations and formal definitions.
We use the following notations and symbols throughout this paper:

\begin{itemize}
  \item $2n$: number of teams, where $n \geq 2$, 
  \item $T = \{1,2, \ldots, 2n\}$: set of teams,
  \item $S = \{1,2, \ldots , 2n-1 \}$: set of slots.
\end{itemize}
%%%%%%%%%%%%
%% Detail

\iffalse
In this paper,
  we consider partial consecutive $r$ slots
  which start from $s_0 + 1$ and end at $s_0 + r$.
Given the set $S = [s_0+1 : s_0+r]$,
  which contains $r$ consecutive elements $s_0+1, s_0+2, \ldots, s_0+r$,
  it is possible to re-index this set as $[1:r]$,
  thereby defining a new set $S$ ( or $S_r$?).
\fi

A schedule of an RRT %round-robin tournament 
  is described as a pair of a timetable and
  an HA-assignment, which is defined below. 
In this study,
  we assume that a timetable $\tau$ of an RRT is 
  a matrix whose rows and columns are indexed by $T$ and $S$, respectively.
An element $\tau (t,s)$
  denotes the opponent that plays against team $t$ at slot $s$.
A timetable $\tau$ (of an RRT schedule)
  should satisfy the following conditions:
  (i)~a row of $\tau$ indexed by team $t \in T$ 
  is a permutation of teams in $T\setminus \{t\}$,
  and (ii)~$\tau(\tau(t,s), s) = t \; (\forall (t,s)\in T \times S).$
For any pair of different teams
  $t_1, t_2 \in T,$ 
  $s(\{t_1,t_2\})$ indicates a slot in which 
  teams $t_1$ and $t_2$ have a match.
We denote a subset $\{1,2, \ldots , 2n-2 \} \subset S$ as $\Sminus$.
Table~\ref{table:timetable} presents
  a timetable for an RRT schedule of four teams.

\iffalse  
If they do not have a match within $S$,
   $s(\{t_1,t_2\}) = -1$.   
For any pair of distinct teams $t_1, t_2 \in T$, 
  where $T$ is the set of teams,
  $s(\{t_1,t_2\})$ denotes a slot
  in which teams $t_1$ and $t_2$ have a match.
If they do not have a match within the set $S$,
  then $s(\{t_1,t_2\}) = -1$.
\fi

\begin{table}[h]
  \begin{minipage}{.50\textwidth}
    \centering
    \caption[short]{Timetable $\tau$}
    \label{table:timetable}
    \begin{tabular}{|l|l|l|l|}
      \hline Slot & $\mathbf{1}$ & $\mathbf{2}$ & $\mathbf{3}$ \\
      \hline
      \hline team 1 & 2 & 3 & 4 \\
      \hline team 2 & 1 & 4 & 3 \\
      \hline team 3 & 4 & 1 & 2 \\
      \hline team 4 & 3 & 2 & 1 \\
      \hline
    \end{tabular}
  \end{minipage}
  \begin{minipage}{.50\textwidth}
    \centering
    \caption[short]{HA-assignment $\mathbf{Z}$}
    \label{table:ha_assignment}
    \begin{tabular}{|l|l|l|l|}
    \hline Slot & $\mathbf{1}$ & $\mathbf{2}$ & $\mathbf{3}$ \\
    \hline
    \hline team 1 & \Home & \Away & \Home \\
    \hline team 2 & \Away & \Away & \Home \\
    \hline team 3 & \Home & \Home & \Away \\
    \hline team 4 & \Away & \Home & \Away \\
    \hline
    \end{tabular}
  \end{minipage}
\end{table}  
\begin{table}[h]
    \centering
    \caption[short]{Partial HA-assignment}
    \label{table:partial_ha_assignment}
    \begin{tabular}{|l|l|l|l|}
    \hline Slot & $\mathbf{1}$ & $\mathbf{2}$ & $\mathbf{3}$  \\
    \hline
    \hline team 1 & \Home & \Away & \Home  \\
    \hline team 2 & \UnHA & \Away & \Home   \\
    \hline team 3 & \UnHA & \Home & \Away \\
    \hline team 4 & \Away & \Home & \Away \\
    \hline
    \end{tabular}
\end{table}

%
% Table~\ref{table:timetable} is an example of timetable for $n = 2$.
% Each row of $\mathcal{T}$ corresponds to a team and each column of $\mathcal{T}$ corresponds to a slot.
% Each entry $(t, s)$ of the timetable $\mathcal{T}$ denotes the opponent $\tau(t, s)$.
% Table~\ref{table:timetable} shows a timetable for an MDRRT schedule.

%\begin{table}[htbp]
%  \begin{minipage}{.50\textwidth}
%    \caption[short]{Timetable}
%    \label{table:timetable}
%    \begin{tabular}{|l|l|l|l|l|l|l|}
%      \hline Slot & $\mathbf{1}$ & $\mathbf{2}$ & $\mathbf{3}$ & $\mathbf{4}$ & %$\mathbf{5}$ & $\mathbf{6}$ \\
%      \hline team 1 & 2 & 3 & 4 & 2 & 3 & 4 \\
%      \hline team 2 & 1 & 4 & 3 & 1 & 4 & 3 \\
%      \hline team 3 & 4 & 1 & 2 & 4 & 1 & 2 \\
%      \hline team 4 & 3 & 2 & 1 & 3 & 2 & 1 \\
%      \hline
%    \end{tabular}
%\end{minipage}
%  \begin{minipage}{.50\textwidth}
%    \caption[short]{Howe-away assignment}    
%    \label{table:ha_assignment}
%    \begin{tabular}{|l|l|l|l|l|l|l|}
%    \hline Slot & $\mathbf{1}$ & $\mathbf{2}$ & $\mathbf{3}$ & $\mathbf{4}$ & $\mathbf{5}$ & $\mathbf{6}$ \\
%    \hline team 1 & 1 & 0 & 1 & 0 & 1 & 0 \\
%    \hline team 2 & 0 & 0 & 1 & 1 & 1 & 0 \\
%    \hline team 3 & 1 & 1 & 0 & 0 & 0 & 1 \\
%    \hline team 4 & 0 & 1 & 0 & 1 & 0 & 1 \\
%    \hline
%  \end{tabular}
% \end{minipage}  
%\end{table}

A team is defined to be at {\it home} in slot $s$ if the team plays a match 
  at its home stadium in $s$;
  otherwise, it is {\it away} in $s$.
An {\it HA-assignment}
    is a matrix \mbox{$\mathbf{Z} = ( z_{t,s} )$}
    in which the components consist 
    of $\Home$ and $\Away$,  
    whose rows and columns are indexed by $T$ and $S$, respectively. 
A value of $z_{t,s}$ is equal to $\Home$
  if team $t$ plays a match in slot $s$ at home;
  otherwise, it is equal to $\Away$.
For a given timetable $\tau$,
  $\mathbf{Z}$ is {\em consistent} with $\tau$ 
  when $\{z_{t,s},z_{\tau(t,s),s}\}=\{\Home, \Away\}$ 
%  $z_{t,s}= 1 - z_{\tau(t,s),s}$ 
 holds for each $(t,s)\in T \times S$.
% We also use $s$ to denote a function from pairs of teams to slot; 
%  $s(t_1, t_2)$ indicates a slot in which team $t_1$ and team $t_2$ have a match.
Table~\ref{table:ha_assignment} 
  presents an example of an HA-assignment
  that is consistent with the timetable
  presented in Table~\ref{table:timetable}.

Given an HA-assignment $\mathbf{Z}$,
  we say that team $t$ has a break at slot $s \in S\setminus \{1\}$
  if $z_{t,s-1}=z_{t,s}$ holds.
For example, in Table~\ref{table:ha_assignment}, 
  team $2$ has a break at slot $2$,
  as there are consecutive away matches $(z_{2,1}=z_{2,2}=\Away)$.
The number of breaks in an HA-assignment
  is defined as the total number of breaks belonging to all teams.

Following the definitions introduced above,
  we are now ready to present a formal definition of the BMP:

\noindent
\textbf{Break Minimization Problem}:
 {\it
 Given a timetable $\tau$, 
 the BMP finds an HA-assignment consistent with $\tau$
 that minimizes the number of breaks.
% We are given a feasible tournament schedule without home-away assignment
% and our task is to find a home-away assignment that minimizes the number of breaks.  
 }

A {\em partial HA-assignment} 
    is a $T \times S$ matrix 
    $\mathbf{Z} = (z_{t,s})$
    with the components consisting of 
    $\{\Home, \Away, \UnHA \}$.
For a given timetable $\tau$,
  a partial HA-assignment 
  $\mathbf{Z}$ is {\em consistent} with $\tau$ when
\begin{align*}
  (z_{t,s},z_{\tau(t,s),s})\in 
   \{(\Home, \Away), (\Away, \Home), 
     (\Home, \UnHA), (\Away, \UnHA), 
     (\UnHA, \Home), (\UnHA, \Away), 
     (\UnHA, \UnHA) \}
\end{align*}     
%  $z_{t,s}= 1 - z_{\tau(t,s),s}$ 
 holds for each $(t,s)\in T \times S$.
Table~\ref{table:partial_ha_assignment} 
  presents an example of an HA-assignment
  that is consistent with the timetable
  presented in Table~\ref{table:timetable}.
Suppose that   
  we have a partial HA-assignment $\mathbf{Z}$
    consistent with a given timetable $\tau$.
We can construct an HA-assignment $\tilde{Z}$
  consistent with $\tau$ 
  by assigning $\Home$ or $\Away$ 
  to $\UnHA$-components of $\mathbf{Z}$
  so that $\tilde{z}_{t,s} = z_{t,s}$ 
    for $z_{t,s} \in \{\Home, \Away\}$
  and $(\tilde{z}_{t,s},\tilde{z}_{\tau(t,s),s})\in 
    \{(\Home, \Away), (\Away, \Home)\}$ are satisfied.

%%%%%%%%%%%%%%%%%%%%%%%%%%%%%%%%%%%%%%%%%%%%%

\subsection{Odd Cycle Transversal}

Given an undirected graph $G=(V,E)$, 
 a vertex subset $V'\subseteq V$ is called 
 an {\em OCT} if and only if  
 $V'$ has a nonempty intersection 
 with every odd cycle in $G$.
In other words, the induced subgraph 
 $G[V\setminus V']$ is a bipartite graph.
The minimum OCT problem finds
 an OCT 
 with the smallest cardinality.
 
Given an OCT $V'$ of $G=(V,E)$,
 we can construct a map 
  $\OCTmap: V \rightarrow (0,1,2)$
 (where $(0,1,2)$ is an ordered triplet 
 of 3-set $\{0,1,2\}$) 
 satisfying that 
 $\OCTmap^{-1}(0)=V'$ and 
 both $\OCTmap^{-1}(1)$ and $\OCTmap^{-1}(2)$ 
 are independent sets of $G.$
 Here, we note that this construction is not unique.
 Conversely, if we have a map 
  $\OCTmap: V \rightarrow (0,1,2)$
   satisfying the condition that
\begin{equation}\label{OCTmap}
    \mbox{ \hspace{-0.5cm} 
 both $\OCTmap^{-1}(1)$ and $\OCTmap^{-1}(2)$ 
 are independent sets of $G,$}
\end{equation}
then $\OCTmap^{-1}(0)$ becomes an OCT.
We call such a map an {\em OCT-map}.
Throughout this paper, 
 we represent an OCT 
 by an OCT-map \mbox{$\OCTmap: V \rightarrow (0,1,2)$}.
 
%%%%%%%%%%%%%%%%%%%%%%%%%%%%%%%%%%%%%%%%%%%%%
%%%%%%%%%%%%%%%%%%%%%%%%%%%%%%%%%%%%%%%%%%%%%
%% \subsection{Bigram Based Graph}

%Let $\tau$ be a given timetable.

%Clearly, we have $s(t_1,t_2)=s(t_2,t_1).$
%For each slot $s\in S$, 
% We introduce a set of matches held at slot $s$, 
%  defined by 
%  $T_2(s)=\{\{t_1,t_2\} \mid s(t_1,t_2)=s\}$.  

%%%%%%%%%%%%%%%%%%%%%%%%%%%%%%%%%%%%%%%%%%%%%
% Make the sections and subsections according to your paper. Please organize all your theorems, lemmas, definitions, remarks, etc., into the appropriate LaTeX environments.

%\begin{theorem}\label{thm-1}
%Let
%.
%.
%.
%\end{theorem}
%\begin{proof}
%Suppose that	
%\end{proof}
%In LaTeX, the environment eqnarray or eqnarray* is an old configured command. If we use this old environment, the spaces before and after the suitable operators, e.g. ``+'' or ``='', will be larger than the normal case. Please don't use this old environment. Replace it with other stable and well defined other mathematical environments to handle formulas with multiple rows. For example, please use the environments align (align*) or aligned (aligned*) or multline (multline*) or gather (gathered*) etc.

\section{Main theorem}

To describe our main theorem,
 we consider an auxiliary graph of 
 a given timetable $\tau$.
It is an undirected graph
$G(\tau)=(\bV, \bE)$ 
 with vertex-set $\bV=T \times \Sminus$ 
and edge-set $\bE$ defined by $\bE = \bE_H \cup \bE_0 \cup \bE_1$ where
\begin{eqnarray*}
  \bE_H &=&  \{\{(t,s-1),(t,s)\} \mid  t \in T, s\in S^{-} \setminus \{1\}\}, \\
  \bE_0 &=& \bigcup_{s \in S^{-}} \{ \{(t_1,s),(t_2,s)\} \mid t_1, t_2 \in T, s(\{t_1,t_2\}) =s\}, \mbox{ and } \\
  \bE_1 &=&    \bigcup_{s \in S^{-} \setminus \{1\}} \{ \{(t_1,s-1),(t_2,s-1)\} \mid t_1, t_2 \in T, s(\{t_1,t_2\}) =s \} \}. 
\end{eqnarray*}
\noindent
Figure~\ref{fig:auxiliary_graph} depicts
  the auxiliary graph corresponding to
  the timetable in Table~\ref{table:timetable}.
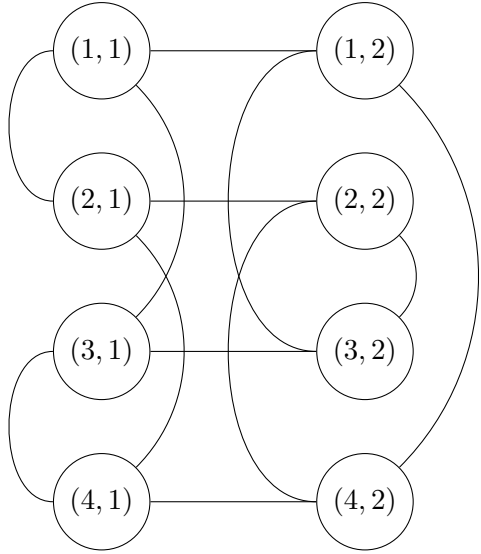
\begin{figure}
\centering
\label{fig:auxiliary_graph}
\begin{tikzpicture}
    % ノードの定義
    \node[draw, circle](N_1_1) at (0,7) {$(1,1)$};
    \node[draw, circle](N_2_1) at (0,5) {$(2,1)$};
    \node[draw, circle](N_3_1) at (0,3) {$(3,1)$};
    \node[draw, circle](N_4_1) at (0,1) {$(4,1)$};
    \node[draw, circle](N_1_2) at (3.5,7) {$(1,2)$};
    \node[draw, circle](N_2_2) at (3.5,5) {$(2,2)$};
    \node[draw, circle](N_3_2) at (3.5,3) {$(3,2)$};
    \node[draw, circle](N_4_2) at (3.5,1) {$(4,2)$};

    % エッジの定義
    \draw (N_1_1) -- (N_1_2);
    \draw (N_2_1) -- (N_2_2);
    \draw (N_3_1) -- (N_3_2);
    \draw (N_4_1) -- (N_4_2);
    \draw (N_1_1) to [out=-45,in=45] (N_3_1);
    \draw (N_2_1) to [out=-45,in=45] (N_4_1);
    \draw (N_1_1) to [out=180,in=180] (N_2_1);
    \draw (N_3_1) to [out=180,in=180] (N_4_1);
    \draw (N_2_2) to [out=-45,in=45] (N_3_2);
    \draw (N_1_2) to [out=-45,in=45] (N_4_2);
    \draw (N_1_2) to [out=180,in=180] (N_3_2);
    \draw (N_2_2) to [out=180,in=180] (N_4_2);
\end{tikzpicture}
\caption{Auxiliary graph}
\end{figure}

The following is our main theorem.

\begin{THEO}
  \label{thm:main}
    Let $\tau$ be a given timetable and 
    $\minB (\tau)$ the optimal value 
    of the BMP
    defined by $\tau.$ 
   Then, $\minB (\tau)$ is equal to
    the size of the minimum OCT 
    of the auxiliary graph $G(\tau).$
\end{THEO}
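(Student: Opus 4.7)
I will establish the equality by proving the two inequalities $|V^*| \le \minB(\tau)$ and $\minB(\tau) \le |V^*|$, where $V^*$ denotes a minimum OCT of $G(\tau)$. The first inequality follows from a direct construction of an OCT-map from any HA-assignment; the second requires the reverse construction, which I expect to be the main obstacle.

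For the first inequality, given any HA-assignment $Z$ consistent with $\tau$, I define $\alpha_Z : \bV \to \{0,1,2\}$ by $\alpha_Z(t,s)=0$ if $z_{t,s}=z_{t,s+1}$, $\alpha_Z(t,s)=1$ if $(z_{t,s},z_{t,s+1})=(\Home,\Away)$, and $\alpha_Z(t,s)=2$ if $(z_{t,s},z_{t,s+1})=(\Away,\Home)$. Since breaks occur at slots $2,\dots,2n-1$, which are in bijection with $\Sminus$ via $s \mapsto s+1$, $|\alpha_Z^{-1}(0)|$ equals the number of breaks in $Z$. I then verify condition~\eqref{OCTmap} edge-by-edge: for $\bE_H$, equal nonzero $\alpha_Z$-values at $(t,s-1)$ and $(t,s)$ would force inconsistent values of $z_{t,s}$; for $\bE_0$, teams meeting in slot $s$ must have opposite HA-values there; for $\bE_1$, equal nonzero $\alpha_Z$-values at $(t_1,s-1),(t_2,s-1)$ would force $z_{t_1,s}=z_{t_2,s}$, violating the timetable. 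Applying this to an optimal $Z$ yields $|V^*| \le \minB(\tau)$.

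For the reverse inequality, starting from a minimum OCT-map $\alpha$, I build a partial HA-assignment by setting $(z_{t,s},z_{t,s+1})=(\Home,\Away)$ whenever $\alpha(t,s)=1$ and $(\Away,\Home)$ whenever $\alpha(t,s)=2$, leaving remaining entries as $\UnHA$. The $\bE_H$ edges make the per-team forcing internally consistent, and the $\bE_0$ edges force opposite HA-values between teams meeting in slot $s$ whenever both endpoints are nonzero. I then plan to apply the partial-to-full extension of Section~\ref{subsection:BMP} to obtain a complete HA-assignment whose only breaks lie at positions corresponding to $\alpha^{-1}(0)$.

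The main obstacle is showing that the partial assignment is consistent with $\tau$ in the presence of $0$-vertices: a configuration such as $\alpha(t_1,s)=1$, $\alpha(t_1,s-1)=0$, $\alpha(t_2,s-1)=2$, $\alpha(t_2,s)=0$ (with $t_1,t_2$ meeting in slot $s$) satisfies condition~\eqref{OCTmap} yet forces $z_{t_1,s}=z_{t_2,s}$. To resolve this I intend to exploit two degrees of freedom: reorienting the $2$-coloring on any connected component of the bipartite subgraph $G[\bV \setminus \alpha^{-1}(0)]$, which preserves $|\alpha^{-1}(0)|$, and invoking the minimality of $|\alpha^{-1}(0)|$ to rule out unresolvable conflicts---in such a conflict, I expect to be able to remove a vertex from $\alpha^{-1}(0)$ and recolor, contradicting minimality. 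Once consistency is secured, the extension step introduces no break at a forced position, so the total number of breaks equals $|\alpha^{-1}(0)| = |V^*|$, completing the proof.
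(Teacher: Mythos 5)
Your first inequality is correct and is precisely the paper's Lemma~\ref{lemma:assignment_to_OCTmap}: the map $\alpha_Z$ is an OCT-map and $|\alpha_Z^{-1}(0)|$ equals the number of breaks, giving $\minO(\tau)\le\minB(\tau)$. You have also correctly located the crux of the reverse direction: the configuration you exhibit is exactly what the paper calls an \emph{inconsistent rectangular cycle}, and handling it is the entire content of the paper's repair lemma.

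However, neither of your two proposed remedies closes this gap. First, flipping the $\{\HA,\AH\}$-coloring on a connected component of $G[\bV\setminus\OCTmap^{-1}(0)]$ keeps the set $\OCTmap^{-1}(0)$ fixed, but the two offending vertices $(t_1,s-1)$ and $(t_2,s)$ are \emph{diagonal} corners of the rectangular cycle and may be joined by an odd path avoiding $\OCTmap^{-1}(0)$, hence lie in the same component; then every flip sends the pattern $(\HA,\AH)$ to $(\AH,\HA)$, which is still inconsistent, and even when they lie in different components a flip can create new inconsistencies elsewhere, so you have no termination or global argument. Second, the minimality fallback does not go through as stated: each of the two $0$-corners of the inconsistent rectangle is adjacent (along the rectangle) to both a vertex colored $\HA$ and a vertex colored $\AH$, so neither can simply be given a nonzero color, and nothing you have said produces a strictly smaller OCT. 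The paper avoids minimality entirely: its Lemma~\ref{lem:mapconsistent} repairs an \emph{arbitrary} OCT-map by a local exchange that \emph{moves} a $0$ from one corner of the earliest inconsistent rectangular cycle to an adjacent corner (the maps $\OCTmap_1$ and $\OCTmap_2$), choosing between the two exchanges according to the colors $u,v,w$ of the three neighboring vertices so that no new inconsistency is introduced; this preserves $|\OCTmap^{-1}(0)|$, strictly decreases the number of inconsistent rectangular cycles, and therefore terminates. To complete your proof you would need to supply an exchange argument of this kind (note in particular that the OCT itself must be allowed to change, not merely its complementary $2$-coloring); the component-flip plus minimality sketch is not sufficient.
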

 
First, we show that 
 an HA-assignment consistent with
 a given timetable $\tau$ 
 gives an OCT-map of auxiliary graph $G(\tau).$
Let $\mathbf{Z}$ be an HA-assignment consistent 
 with $\tau.$
We construct a map
 $\OCTmap_{\mathbf{Z}}^{}: \bV \rightarrow 
   (0, 1, 2)$ by 
\[
\OCTmap_{\mathbf{Z}}^{} (t,s) = 
\left\{
\begin{array}{ll}
  1 & (\mbox{if } z_{t,s}=\Home \mbox{ and } z_{t,s+1}=\Away), \\
  2 & (\mbox{if } z_{t,s}=\Away \mbox{ and } z_{t,s+1}=\Home), \\
  0 & (\mbox{otherwise}).
\end{array}
\right.
\]

\begin{LEMM}
\label{lemma:assignment_to_OCTmap}
The map $\OCTmap_{\mathbf{Z}}$ is an OCT-map,
  where $| \OCTmap_{\mathbf{Z}}^{-1} (0) |$, the size of the OCT, is equal to $\minB(\tau)$.
\end{LEMM}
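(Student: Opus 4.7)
The claim splits naturally into two parts: (i) $|\OCTmap_{\mathbf{Z}}^{-1}(0)|$ equals the number of breaks in $\mathbf{Z}$ (so when $\mathbf{Z}$ attains $\minB(\tau)$, we get the stated equality), and (ii) $\OCTmap_{\mathbf{Z}}$ satisfies condition~\eqref{OCTmap}. Part (i) is the easy bookkeeping step. Since every entry of $\mathbf{Z}$ lies in $\{\Home,\Away\}$, the ``otherwise'' case in the definition of $\OCTmap_{\mathbf{Z}}$ collapses to $z_{t,s}=z_{t,s+1}$, which by definition is exactly the condition that team $t$ has a break at slot $s+1$. As $(t,s)$ ranges over $T\times \Sminus$, the pair $(t,s+1)$ ranges over $T\times (S\setminus\{1\})$, so summing gives the total break count of $\mathbf{Z}$.

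For part (ii), the plan is to verify by a short case analysis, one per edge family, that neither $\OCTmap_{\mathbf{Z}}^{-1}(1)$ nor $\OCTmap_{\mathbf{Z}}^{-1}(2)$ contains an edge. For $\bE_H$ (an edge $\{(t,s-1),(t,s)\}$), if both endpoints were labelled $1$ we would read off $z_{t,s}=\Away$ from the first endpoint and $z_{t,s}=\Home$ from the second, which is a contradiction; both labelled $2$ is symmetric. For $\bE_0$ (an edge $\{(t_1,s),(t_2,s)\}$ between opponents in slot $s$), labelling both $1$ forces $z_{t_1,s}=z_{t_2,s}=\Home$, contradicting consistency of $\mathbf{Z}$, which demands $\{z_{t_1,s},z_{t_2,s}\}=\{\Home,\Away\}$; label~$2$ is symmetric. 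For $\bE_1$ (an edge $\{(t_1,s-1),(t_2,s-1)\}$ where $t_1,t_2$ meet in slot $s$), labelling both endpoints $1$ now forces $z_{t_1,s}=z_{t_2,s}=\Away$, again contradicting the $\{\Home,\Away\}$ pattern that consistency requires in slot $s$; label~$2$ is symmetric.

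I expect no real obstacle: each edge type contradicts the same-label hypothesis using either (a) the single-team consistency of $\mathbf{Z}$ in successive slots, or (b) the two-team consistency imposed by a match. The only points that require a moment of attention are the index bookkeeping (that $\OCTmap_{\mathbf{Z}}(t,s)$ is well defined for every $s\in\Sminus$, which is why $\Sminus$ stops at $2n-2$ so that $z_{t,s+1}$ makes sense) and keeping track of which slot's opponent pattern is witnessed by each of the three edge families. Once these cases are dispatched, combining (i) and (ii) yields that $\OCTmap_{\mathbf{Z}}$ is an OCT-map of $G(\tau)$ whose $0$-preimage has cardinality equal to the number of breaks of $\mathbf{Z}$, which for an optimal $\mathbf{Z}$ is $\minB(\tau)$.
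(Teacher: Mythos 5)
Your proposal is correct and follows essentially the same route as the paper: a case analysis over the three edge families $\bE_H$, $\bE_0$, $\bE_1$ showing that a same-label edge contradicts either single-team consistency across consecutive slots or the $\{\Home,\Away\}$ pattern forced by a match. You are in fact slightly more careful than the paper in two places it glosses over — you spell out the $\bE_1$ case (where the contradiction is witnessed in slot $s$, not $s-1$) and you make explicit that $|\OCTmap_{\mathbf{Z}}^{-1}(0)|$ counts the breaks of $\mathbf{Z}$ itself, so that equality with $\minB(\tau)$ requires $\mathbf{Z}$ to be optimal, which is how the lemma is actually used in the proof of the main theorem.
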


\begin{proof}
We first show that $\OCTmap_{\mathbf{Z}}^{-1} (\HA)$ is
  a independent set of $G(\tau).$
Let us assume that there exists
  an edge $(v,w)$
  whose endpoints belong to $\OCTmap_{\mathbf{Z}}^{-1} (\HA)$.  
The edge $(v,w)$ belongs to $\bE_H, \bE_0$, or $\bE_1$.
If it belongs to $\bE_H$, then
   there exists a $(t,s) \in \bV$ such that $(v,w) = \{(t,s-1), (t,s)\}$.
Then,   
   $z_{t,s} = \Away$ follows from $\OCTmap_{\mathbf{Z}}(v) = 1$ and
   $z_{t,s} = \Home$ follows from $\OCTmap_{\mathbf{Z}}(w) = 1$.
This is a contradiction.  
If it belongs to $\bE_0$, then
   there exists $(s,t_1,t_2) \in S^- \times T \times T$
   such that $(v,w) = \{(t_1,s), (t_2,s)\}$.
Then,   
   $z_{t_1,s} = \Home$ follows from $\OCTmap_{\mathbf{Z}}(v) = 1$ and
   $z_{t_2,s} = \Home$ follows from $\OCTmap_{\mathbf{Z}}(w)  = 1$.
This is also a contradiction.
The same occurs for the case belonging to $\bE_1$.
Therefore,
  $\OCTmap_{\mathbf{Z}}^{-1} (\HA)$ is a independent set.
Similarly,
  $\OCTmap_{\mathbf{Z}}^{-1} (\AH)$ is a independent set,
  and
  $\OCTmap_{\mathbf{Z}}^{}$ is an OCT-map of $G(\tau)$ from the definition.
The size of the OCT, $| \OCTmap_{\mathbf{Z}}^{-1} (0) |$, is
  equal to $\minB(\tau)$.
\medskip 
\end{proof}
\noindent
Table~\ref{table:OCTmapFromPartial} 
  indicates the OCT-map $\OCTmap_{\mathbf{Z}}^{}$
  which is obtained from the HA-assignment $\mathbf{Z}$
  in Table~\ref{table:partial_ha_assignment}.

\noindent
\begin{table}[h]
    \begin{minipage}{.30\textwidth}
    \centering
    \caption[short]{OCT map $\OCTmap_{\mathbf{Z}}^{}$}
    \label{table:OCTmapFromPartial}
    \begin{tabular}{|l|l|l|}
      \hline Slot & $\mathbf{1}$ & $\mathbf{2}$  \\
      \hline
      \hline team 1 & \HA & \AH   \\
      \hline team 2 & 0   & \AH   \\
      \hline team 3 & 0   & \HA   \\   
      \hline team 4 & \AH & \HA   \\
      \hline
    \end{tabular}
  \end{minipage}
  \begin{minipage}{.30\textwidth}
    \centering
    \caption[short]{OCT map $\OCTmap$}
    \label{table:bad-OCTmap}
    \begin{tabular}{|l|l|l|}
      \hline Slot & $\mathbf{1}$ & $\mathbf{2}$  \\
      \hline
      \hline team 1 & \AH & 0   \\
      \hline team 2 & 0   & 0   \\
      \hline team 3 & 0   & \HA   \\   
      \hline team 4 & \AH & 0   \\
      \hline
    \end{tabular}
  \end{minipage}
  \begin{minipage}{.30\textwidth}
    \centering
    \caption[short]{Partial assignment}
    \label{table:inconsistent-HA-assignment}
    \begin{tabular}{|l|l|l|l|}
    \hline Slot & $\mathbf{1}$ & $\mathbf{2}$ & $\mathbf{3}$  \\
    \hline
    \hline team 1 & \Away & \Home & \UnHA  \\
    \hline team 2 & \UnHA & \UnHA & \UnHA \\
    \hline team 3 & \UnHA & \Home & \Away \\
    \hline team 4 & \Away & \Home & \UnHA \\
    \hline
    \end{tabular}
  \end{minipage}
\end{table}

Next, we discuss a procedure to construct a partial HA-assignment
  from an OCT-map of  $G(\tau).$
Let $\OCTmap:\bV \rightarrow (0,\HA, \AH)$ be an OCT-map
  of $G(\tau)=(\bV, \bE),$ which satisfies that
  both $\OCTmap^{-1}(\HA)$ and $\OCTmap^{-1}(\AH)$
  are independent sets of $G(\tau).$
The above property implies that 
    each  edge $\{(t,s-1), (t,s)\} \in \bE_H$ 
    satisfies that $(\OCTmap (t,s-1), \OCTmap (t,s)) \not \in \{(\HA, \HA), (\AH, \AH)\}$.
Thus, we can uniquely construct 
 a partial HA-assignment $\mathbf{Z}^{\OCTmap}$ defined by   
\[
z^{\OCTmap}_{t,s}=\left\{
    \begin{array}{ll}
        \Home & (\mbox{if } \OCTmap (t,s)=\HA \mbox{ or } \OCTmap (t, s-1)=\AH), \\
        \Away & (\mbox{if } \OCTmap (t,s)=\AH \mbox{ or } \OCTmap (t, s-1)=\HA), \\
        \UnHA & (\mbox{otherwise}).
    \end{array}
    \right.
\]
%
%Note that only in the above definition,
% we set $\OCTmap (t,-1)=\OCTmap (t, r)=0 \; (\forall t \in T).$ 
When we apply the above procedure  
 to the OCT-map in Table~\ref{table:OCTmapFromPartial},
 we obtain the partial HA-assignment
 in Table~\ref{table:partial_ha_assignment}. 

A partial HA-assignment obtained 
 from an OCT-map of $G(\tau)$ 
 is not always consistent with
 the given timetable $\tau.$
For example, Table~\ref{table:bad-OCTmap}
  gives an OCT-map $\OCTmap$ of $G(\tau)$ 
  defined by timetable $\tau$ 
  in Table~\ref{table:timetable}.
However, the partial HA-assignment 
  $\mathbf{Z}^{\OCTmap}$,  
  presented in Table~\ref{table:inconsistent-HA-assignment}, 
  is not consistent 
  with $\tau$ in Table~\ref{table:timetable}.
This inconsistency arises
  because the pair of teams 1 and 3 play matches
  at each of their home venues in slot 2,
  which does not satisfy the properties of RRT schedules.  

We will show that
  we can obtain a consistent partial HA-assignment
  by repairing $\alpha$.
To facilitate the description of the proof,
  we introduce some notations.
Let \(t_1\) and \(t_2\) be teams that have a game in slot \(s\),
  that is \(t_2 = \tau(t_1, s)\).
Here,
  we assume $s \geq 2$ and $s \in S^-$.
Then,
  the following relations hold:
\[
 \{(t_1, s), (t_2, s)\} \in \bE_0,  \{(t_1, s-1), (t_2, s-1)\} \in \bE_1,
\]
\[
 \{(t_1, s-1), (t_1, s)\}, \{(t_2, s-1), (t_2, s)\} \in \bE_H.
\]

Thus,
  the sequence 
  $((t_1,s-1),(t_1,s),(t_2,s),(t_2,s-1))$
  forms a 4-cycle on $G(\tau)$.
We call such a 4-cycle
  a {\it rectangular cycle}.
We denote the rectangular cycle
  corresponding to a pair of teams $(t_1, t_2)$ as
  $C(t_1, t_2)$.
A map $\OCTmap$ is {\it inconsistent} on a rectangular cycle $C(t_1, t_2)$
  if the following holds:
\begin{align*}
    & \left( \begin{array}{cc}
    \OCTmap (t_1,s-1) & \OCTmap (t_1,s) \\ 
    \OCTmap (t_2,s-1) & \OCTmap (t_2,s) 
    \end{array} \right)
    \in
    \left\{
    \left( \begin{array}{cc}
    \AH & 0 \\ 
      0 & \HA 
    \end{array} \right) 
    \mbox{, }
    \left( \begin{array}{cc}
    \HA & 0 \\ 
      0 & \AH 
    \end{array} \right)
    \mbox{,}
    \left( \begin{array}{cc}
    0   & \HA \\ 
    \AH & 0
    \end{array} \right)
        \mbox{, }
    \left( \begin{array}{cc}
    0 & \AH \\ 
    \HA & 0 
    \end{array} \right)
    \right\};
\end{align*}
otherwise, it is {\it consistent} on $C(t_1, t_2)$.

A map $\OCTmap$ is {\it locally bipartite} on a rectangular cycle $C(t_1, t_2)$
  if and only if
  $\OCTmap^{-1}(1) \cap C(t_1, t_2)$ and
  $\OCTmap^{-1}(2) \cap C(t_1, t_2)$
  are independents sets on $C(t_1, t_2)$.
  
\begin{LEMM} \label{lem:mapconsistent}
    Let $\tau$ be a given timetable 
      and $G(\tau)=(\bV, \bE)$ be 
      the corresponding auxiliary graph.
    Let $\OCTmap: \bV \rightarrow (0,\HA, \AH)$ be an OCT-map of $G(\tau)$.
    Assume that
      $\alpha$ is consistent on
      every rectangular cycle of $G(\tau)$.
    Then,
      the corresponding partial HA-assignment
      $\mathbf{Z}^{\OCTmap}$ is
      consistent with $\tau.$
\end{LEMM}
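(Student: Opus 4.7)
The goal is to verify, for every slot $s\in S$ and every pair of teams $t_1,t_2\in T$ with $\tau(t_1,s)=t_2$, that the pair $(z^{\OCTmap}_{t_1,s},z^{\OCTmap}_{t_2,s})$ is not equal to $(\Home,\Home)$ or $(\Away,\Away)$; all other pairs are permitted by the definition of consistency. The approach is to fix such a game and split into three cases according to whether $s$ is $1$, $2n-1$, or an interior slot in $\{2,\dots,2n-2\}$. In the two boundary cases the definition of $z^{\OCTmap}$ collapses to a single clause (because $\OCTmap(t,0)$ and $\OCTmap(t,2n-1)$ are not defined), so the conclusion follows immediately from the independence of a single color class across a single edge.

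More concretely, for $s=1$, $z^{\OCTmap}_{t,1}=\Home$ is equivalent to $\OCTmap(t,1)=\HA$ and $\Away$ to $\OCTmap(t,1)=\AH$; the edge $\{(t_1,1),(t_2,1)\}\in \bE_0$ witnesses that $\OCTmap^{-1}(\HA)$ (resp.\ $\OCTmap^{-1}(\AH)$) cannot contain both endpoints, so the two forbidden pairs cannot occur. The case $s=2n-1$ is symmetric: $z^{\OCTmap}_{t,2n-1}$ is determined solely by $\OCTmap(t,2n-2)$, and the edge $\{(t_1,2n-2),(t_2,2n-2)\}\in \bE_1$ (coming from $s(\{t_1,t_2\})=2n-1$) does the same job. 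Both arguments use only that $\OCTmap$ is an OCT-map, not consistency on rectangular cycles.

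The interior case $2\leq s\leq 2n-2$ is the technical heart, and this is where rectangular-cycle consistency is exactly what is needed. Here the four vertices $(t_1,s-1),(t_1,s),(t_2,s-1),(t_2,s)$ form the cycle $C(t_1,t_2)$, with $\bE_H$-edges on the sides, an $\bE_0$-edge on the right, and an $\bE_1$-edge on the left. Assume for contradiction that $z^{\OCTmap}_{t_1,s}=z^{\OCTmap}_{t_2,s}=\Home$ (the $\Away$ subcase will be completely symmetric). Expanding the definition, each of $z^{\OCTmap}_{t_1,s}=\Home$ and $z^{\OCTmap}_{t_2,s}=\Home$ is a disjunction of two conditions on $\OCTmap$. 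The two ``both right-column'' and ``both left-column'' possibilities are ruled out by the $\bE_0$- and $\bE_1$-edges respectively, which forces the mixed patterns $\OCTmap(t_1,s)=\HA,\ \OCTmap(t_2,s-1)=\AH$ or $\OCTmap(t_1,s-1)=\AH,\ \OCTmap(t_2,s)=\HA$. In either pattern, the two $\bE_H$-edges together with the $\bE_0$- and $\bE_1$-edges force the remaining two corners to receive color $0$, and one reads off exactly one of the four matrices in the definition of ``inconsistent on $C(t_1,t_2)$'', contradicting the hypothesis.

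The chief obstacle is bookkeeping in this last case: one must verify that \emph{every} way for both entries to become $\Home$ ultimately produces an inconsistent rectangular pattern, which requires combining the disjunctive definition of $\mathbf{Z}^{\OCTmap}$ with all four edges of $C(t_1,t_2)$ simultaneously. Once this is done, the $\Away$ subcase is a mirror image (swapping the roles of $\HA$ and $\AH$ and yielding the other two matrices in the list), and the proof is complete.
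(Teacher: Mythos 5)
Your proof is correct and follows essentially the same route as the paper's: an inconsistent pair $(\Home,\Home)$ or $(\Away,\Away)$ at an interior slot forces one of the four forbidden matrices on the rectangular cycle $C(t_1,t_2)$, while at slot $1$ it directly violates the independence of $\OCTmap^{-1}(\HA)$ or $\OCTmap^{-1}(\AH)$ across the $\bE_0$-edge. In fact your write-up is slightly more careful than the paper's, which folds $s=2n-1$ into the generic ``$s\geq 2$'' rectangular-cycle case even though no rectangular cycle exists at that slot; your separate treatment via the $\bE_1$-edge at slot $2n-2$ closes that small gap.
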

\noindent
\begin{proof}
Assume that $\mathbf{Z}^{\OCTmap}$ is inconsistent with $\tau$.
Then, 
  there is an inconsistent pair $(z^{\alpha}_{t,s}, z^{\alpha}_{\tau(t,s),s})$.
If $s \geq 2$, 
  then $\OCTmap$ becomes inconsistent on rectangular cycle $C(t, \tau(t,s))$,
  which contradicts the assumption of the lemma.
If $s=1$,
  then $z^{\alpha}_{t,1} = z^{\alpha}_{\tau(t,1),1} \neq *$ holds,
  which implies that $\OCTmap (t,1)=\OCTmap (\tau (t,1),1) \neq 0.$
It contradicts the assumption that $\OCTmap^{-1}(\HA)$ and $\OCTmap^{-1}(\AH)$ are independent sets.
\end{proof}

\bigskip
The following lemma gives 
 a procedure to construct 
 a consistent partial HA-assignment
 from a given OCT-map.
 
\begin{LEMM} \label{lem:mapconsistent}
  Let $\tau$ be a given timetable 
    and $G(\tau)=(\bV, \bE)$ be 
    the corresponding auxiliary graph.
  If we have an OCT-map 
    $\OCTmap: \bV \rightarrow (0,\HA, \AH)$ of $G(\tau),$
    then there exists an OCT-map $\tilde{\OCTmap}: \bV \rightarrow (0,\HA, \AH)$
      of $G(\tau)$ satisfying that 
    (1)~$|\OCTmap^{-1}(0)|=|\tilde{\OCTmap}^{-1}(0)|$ 
      and 
    (2)~the partial HA-assignment
      $\mathbf{Z}^{\tilde{\OCTmap}}$
      is consistent with $\tau.$
\end{LEMM}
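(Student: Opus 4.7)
The plan is to convert $\OCTmap$ into $\tilde{\OCTmap}$ by a finite sequence of local swap operations, each of which (i) preserves the OCT-map property, (ii) preserves $|\OCTmap^{-1}(0)|$, and (iii) strictly reduces the number of rectangular cycles on which the current map is inconsistent. By the previous lemma, the resulting $\tilde{\OCTmap}$ then automatically satisfies both conditions claimed, and $\mathbf{Z}^{\tilde{\OCTmap}}$ is consistent with $\tau$.

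The starting point is the observation that each of the four inconsistent patterns places the two $0$'s on opposite corners of $C(t_1,t_2)$ while the remaining two corners carry $\HA$ and $\AH$ on the opposite diagonal. A natural local repair is to swap the $0$ at one corner with the non-zero label at an adjacent corner along the same team's row (or along the same slot's column). For instance, I would turn $\bigl(\begin{smallmatrix} 0 & \HA \\ \AH & 0 \end{smallmatrix}\bigr)$ into $\bigl(\begin{smallmatrix} \HA & 0 \\ \AH & 0 \end{smallmatrix}\bigr)$ by swapping the labels at $(t_1,s-1)$ and $(t_1,s)$. Restricted to the cycle, the result is one of the consistent configurations (two adjacent $0$'s, with the remaining two corners forced to different non-zero labels by the $E_1$ edge between them), and the $0$-count is preserved.

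Validating the global OCT-map property reduces to controlling the external neighbours of the single vertex whose label changes from $0$ to a non-zero value; the reverse change at the other swapped vertex only relaxes constraints. If the chosen swap is blocked by an external conflict, I would switch to one of the three symmetric swaps available on the same rectangular cycle (two along the rows, two along the columns). If every single-cycle swap is blocked, I would propagate the correction along the chain of offending external neighbours, using the fact that such a chain stays entirely inside the bipartite subgraph $G(\OCTmap)\setminus\OCTmap^{-1}(0)$, so its parity along any walk is controlled.

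The main obstacle I anticipate is showing that this repair procedure terminates without reviving a previously repaired rectangular cycle. My plan is to equip the procedure with a monovariant, most naturally the number of inconsistent rectangular cycles, lexicographically refined by the slot index if required, and to verify through a routine case analysis that every swap strictly decreases it. Once no inconsistent rectangular cycle remains, the previous lemma yields that $\mathbf{Z}^{\tilde{\OCTmap}}$ is consistent with $\tau$, while $|\tilde{\OCTmap}^{-1}(0)|=|\OCTmap^{-1}(0)|$ has been maintained throughout, completing the proof.
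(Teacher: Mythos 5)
Your overall skeleton---repeatedly apply a local swap that preserves the OCT-map property and the zero count while decreasing a monovariant counting inconsistent rectangular cycles---is the same strategy as the paper's, and your observation that re-labelling a vertex from nonzero to $0$ can only relax constraints is sound (it cannot create a new inconsistent cycle either, since an inconsistent pattern forces the zeroed vertex to have been adjacent to both a $\HA$ and an $\AH$). The gap is that you defer the crux to ``a routine case analysis'': showing that some admissible swap always exists and strictly decreases the monovariant. This is not routine, and neither of your fallbacks closes it. A swap is blocked exactly when the vertex receiving the new nonzero label has an external neighbour carrying that same label. If the inconsistent cycle you attack is not the earliest one, both swaps that fill the slot-$(s-1)$ zero can be blocked simultaneously (its two external neighbours carrying $\HA$ and $\AH$ forces an inconsistent cycle at slot $s-1$, which is possible when inconsistencies chain), and symmetrically both swaps filling the slot-$s$ zero can be blocked by an inconsistent cycle at slot $s+1$; so ``switch to one of the three symmetric swaps'' can fail outright, and ``propagate the correction along the chain of offending neighbours'' is not developed enough to see that it preserves the zero count, the independence of $\OCTmap^{-1}(\HA)$ and $\OCTmap^{-1}(\AH)$, or termination.

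The paper closes exactly this gap with two explicit choices you would need to supply. First, it always repairs an inconsistent cycle at the \emph{earliest} slot $s^*$, and it only uses the two swaps that move a label onto the zero at slot $s^*-1$ (your column swap, its $\OCTmap_1$, and your row swap, its $\OCTmap_2$), never the two that fill the zero at slot $s^*$; the point is that the receiving vertex's external neighbours then live at slots $s^*-2$ and $s^*-1$, where every rectangular cycle is already consistent by minimality of $s^*$. Second, it uses that minimality together with the independence constraints to show that the three surrounding labels $u=\OCTmap(t_2^*,s^*-2)$, $v=\OCTmap(\tau(t_2^*,s^*-1),s^*-1)$, $w=\OCTmap(\tau(t_2^*,s^*-1),s^*-2)$ impose mutually compatible demands (for instance, if $u,v\neq 0$ and $w=0$ then $u=v$, precisely because otherwise the cycle at slot $s^*-1$ would already be inconsistent), so that one of $\OCTmap_1,\OCTmap_2$ is always admissible, keeps the slot-$(s^*-1)$ cycle consistent, and leaves the two remaining adjacent cycles consistent as well. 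In short, the lexicographic-by-slot refinement you mention ``if required'' is in fact required, and it must be built into which cycle you repair and which of the two zeros you fill, not appended afterwards to an arbitrary swap rule.
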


\noindent
\begin{proof}

Assume that 
 the partial HA-assignment $\mathbf{Z}^{\alpha}$
 is not consistent with the given timetable $\tau.$ 
Let $s^*$ be the earliest slot of inconsistent rectangular cycles;
\[ \hspace{-0.5cm}
 s^*=\min \left\{
  s \in \Sminus 
  \left|
    \begin{array}{l}
        \exists (t_1, t_2) \in T \times T,  s = s(t_1, t_2) \\
        \OCTmap \mbox{ is inconsistent on } C(t_1, t_2) 
    \end{array}
  \right. \right\}.
\]
We note that $s^* \geq 2$ holds.
We denote $(t^*_1, t^*_2)$ by a pair of teams
  where $ s^* = s(t^*_1, t^*_2)$ and $\OCTmap$ is inconsistent on $C(t^*_1, t^*_2)$.
From definition,   
\begin{equation*}\label{eq:inconsistentV}
   ( \OCTmap (t^*_1,s^*-1), \OCTmap (t^*_2, s^*) )
  \in \{ (\HA, \AH), (\AH, \HA) \}
  \;\;\; \mbox{(see Table~\ref{table:OCTmap})}.
\end{equation*}
\noindent
\begin{table}[h]
    \centering
    \caption[short]{Inconsistent vertex $(t^*_1,s^*)$}
    \label{table:OCTmap}
    \begin{tabular}{|l|l|l|}
      \hline Slot & $s^*-1$ & $s^*$  \\
      \hline 
      \hline team $t^*_1$ & $\OCTmap (t^*_1,s^*-1)\neq 0$ & $0$   \\
      \hline team $t^*_2$ & $0$  & $\OCTmap(t^*_2,s^*) \neq 0$   \\
      \hline
    \end{tabular}
\end{table}
We denote  $\tau (t^*_2, s^*-1)$ by $t^*_3.$
In this proof,
  we assume that $( \OCTmap (t^*_1,s^*-1), \OCTmap (t^*_2, s^*) ) = (1,2)$
  without loss of generality.
  
\noindent
If \mbox{$s^*=2,$} 
 then vertex $(t^*_2,s^*-1)$ has 
 three neighbors 
 $\{ (t^*_1,s^*-1),(t^*_2,s^*) ,(t^*_3,s^*-1)\} \subseteq \bV.$
When  $s^*>2,$ vertex $(t^*_2,s^*-1)$ has 
 four neighbors $\{ (t^*_1,s^*-1),(t^*_2,s^*) ,(t^*_3,s^*-1),(t^*_2,s^*-2) \}$.

\smallskip

Now we introduce a pair of maps 
$\OCTmap_1$ and $\OCTmap_2$
 (see Tables~\ref{table:OCTmap1} and~\ref{table:OCTmap2})
 defined by 

\begin{eqnarray*}
\begin{aligned}
\OCTmap_1 (t,s) &=
    \begin{cases}
    0 & (\text{if } (t,s)=(t^*_1,s^*-1)), \\
    \OCTmap (t^*_1,s^*-1) & (\text{if } (t,s)=(t^*_2,s^*-1) )\text{,} \\
    \OCTmap (t,s) & (\text{otherwise}),
    \end{cases}
\\    
\OCTmap_2 (t,s) &=
    \begin{cases}
    \OCTmap (t^*_2,s^*) & (\text{if } (t,s)=(t^*_2,s^*-1)), \\
    0 & (\text{if } (t,s)=(t^*_2,s^*)), \\
    \OCTmap (t,s) & (\text{otherwise}).
    \end{cases}
\end{aligned}
\end{eqnarray*}

\noindent
\begin{table}[h]
      \centering
    \caption[short]{Map $\OCTmap_1$}
    \label{table:OCTmap1}
    \begin{tabular}{|l|l|l|}
      \hline Slot & $s^*-1$ & $s^*$  \\
      \hline 
      \hline team $t^*_1$ & $0$ & $0$ \\ %=\OCTmap (t^*_1,s^*)$   \\
      \hline team $t^*_2$ & $\OCTmap (t^*_1,s^*-1)\neq 0$  & $\OCTmap(t^*_2,s^*) \neq 0$   \\
      \hline
    \end{tabular}
\end{table}
\begin{table}[h]
    \centering
    \caption[short]{Map $\OCTmap_2$}
    \label{table:OCTmap2}
    \begin{tabular}{|l|l|l|}
    \hline Slot & $s^*-1$ & $s^*$  \\
    \hline 
    \hline team $t^*_1$ & $\OCTmap (t^*_1,s^*-1) \neq 0$ & $0$
     \phantom{$=\OCTmap (t^*_1,s^*)$}    \\
    \hline team $t^*_2$ & $\OCTmap (t^*_2,s^*) \neq 0$  & $0$  \\
    \hline
    \end{tabular}
\end{table}

\begin{figure}
\begin{center}
\begin{tikzpicture}[scale=3.0]
    % First square
    \draw[thick] (0,2) -- (1,2) -- (1,3) -- (0,3) -- cycle;
    % Second square
    \draw[thick] (1,1) -- (2,1) -- (2,2) -- (1,2) -- cycle;
    % Third square
    \draw[thick] (2,0) -- (3,0) -- (3,1) -- (2,1) -- cycle;
    % Fourth square
    \draw[thick] (0,0) -- (1,0) -- (1,1) -- (0,1) -- cycle;
    
    % Drawing nodes (circles at corners)
    \foreach \x / \y in {0/0, 1/0, 0/2, 1/1, 2/0} 
    {
      \fill[blue] (\x,\y) circle (1pt);
      \fill[blue] (\x,\y+1) circle (1pt);
      \fill[blue] (\x+1,\y) circle (1pt);
      \fill[blue] (\x+1,\y+1) circle (1pt);
    }
    \node[draw, circle, minimum size=0.5cm, fill=white, scale=0.55] (A) at (1,1) {$(t^*_2, s^*-1)$};
    \node[draw, circle, minimum size=0.5cm, fill=white, scale=0.55] (A) at (1,2) {$(t^*_1, s^*-1)$};
    \node[draw, circle, minimum size=0.5cm, fill=white, scale=0.75] (A) at (2,1) {$(t^*_2, s^*)$};
    \node[draw, circle, minimum size=0.5cm, fill=white, scale=0.75] (A) at (2,2) {$(t^*_1, s^*)$};

    \draw[->, thick] (0.28,2.85) arc[start angle=120,end angle=-210,radius=0.43]
    node[below, xshift=12.0mm, yshift=-4.0mm, scale=0.8] {$C(t_1^*, \tau(t_1^*, s^*-1))$};

    \draw[->, thick] (1.3, 1.85) arc[start angle=120,end angle=-210,radius=0.43]
    node[below, xshift=10.5mm, yshift=-2.5mm] {$C(t^*_1,t^*_2)$};

    \draw[->, thick] (0.28, 0.85) arc[start angle=120,end angle=-210,radius=0.43]
    node[below, xshift=11.0mm, yshift=-3.0mm, scale=0.8] {$C(t_2^*, \tau(t_2^*, s^*-1))$};

    \draw[->, thick] (2.3, 0.85) arc[start angle=120,end angle=-210,radius=0.43]
    node[below, xshift=10.5mm, yshift=-3.0mm, scale=0.8] {$C(t_2^*, \tau(t_2^*, s^*+1))$};

    \node at (-0.15, 1.07) [draw, align=center] {u};
    \node at (1.15, 0.07) [draw, align=center] {v};
    \node at (-0.15, 0.07) [draw, align=center] {w};
\end{tikzpicture}
\end{center}
\label{fig:4cycle}
\caption{Rectangular cycles connected to $C(t_1^*, t_2^*)$.}
\end{figure}
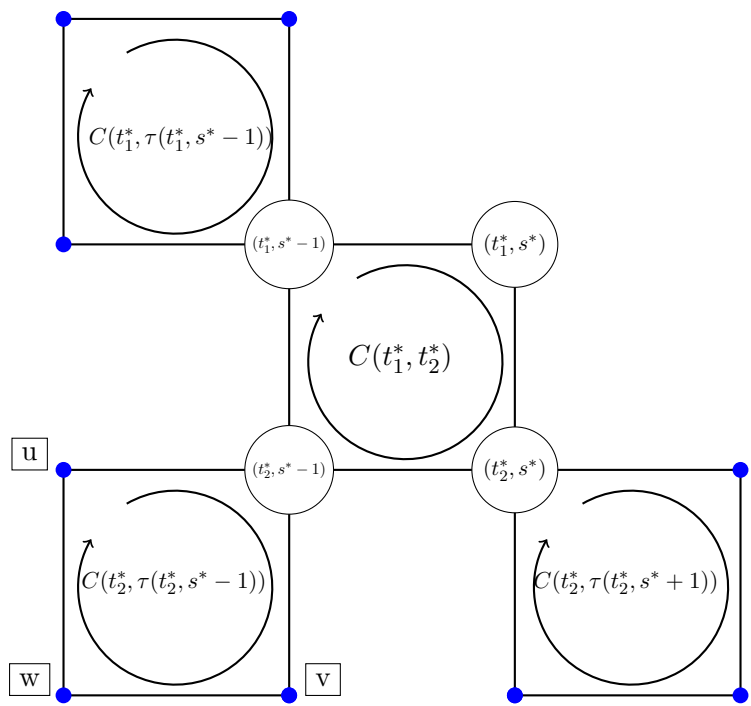
\noindent
Figure~\ref{fig:4cycle} depicts
  three rectangular cycles connected to $C(t_1^*, t_2^*)$.
Both $\OCTmap_1$ or $\OCTmap_2$ are
  locally bipartite on $C(t_1^*, t_2^*)$, $C(t_1^*, \tau(t_1^*, s^*-1))$ and $C(t_2^*, \tau(t_2^*, s^*+1))$.
Both maps are also consistent on them.  
\iffalse
We will show that either $\OCTmap_1$ or $\OCTmap_2$ is
  (i) an OCT-map and (ii) consistent
  on rectangular cycles $C(t_2^*, \tau(t_2^*, s-1))$, $C(t_1^*, \tau(t_1^*, s^*-1)$, $C(t_2^*, \tau(t_2^*, s^*+1))$.
Note that a map is an OCT-map of $G(\tau)$
  if it is an OCT map for those rectangular cycles.
Both $\OCTmap_1$ and $\OCTmap_2$ are
  OCT-maps on $C(t_1^*, \tau(t_1^*, s-1))$ and $C(t_2^*, \tau(t_2^*, s^*+1))$
  because $\alpha$ is an OCT-map.
From the definition,
  both maps are consistent on them.
\fi  
For $C(t_2^*, \tau(t_2^*, s^*-1))$,
  we denote $\{\alpha(t_2^*, s^*-2), \alpha(t_3^*, s^*-1), \alpha(t_3^*, s^*-2)\}$ as $\{(u, v, w)\}$.
Figure~\ref{fig:4cycle} exhibits
  the nodes corresponding to $\{(u, v, w)\}$.

To obtain the desired OCT-map $\tilde{\OCTmap}$,
  we update $\OCTmap$ to $\OCTmap_{\bullet}$
  so that $\OCTmap_{\bullet}$ satisfies that
    (a) $\OCTmap_{\bullet}$ is an OCT-map,
    (b) the number of rectangular cycles where $\OCTmap_{\bullet}$ is inconsistent decreases by 1, and
    (c) $|\OCTmap^{-1}(0)|=|\OCTmap_{\bullet}^{-1}(0)|$ holds.  
For $s^* = 2$,
  both $\OCTmap_1$ and $\OCTmap_2$ satisfy
  those properties.
For $s^* > 2$,
  we have to choose either $\OCTmap_1$ or $\OCTmap_2$ carefully.
The decision on
  which map to choose from follows
\begin{eqnarray*}
  \OCTmap_{\bullet} =
  \begin{cases}
   \alpha_{3-\myalpha{u}} & \text{(if } \myalpha{u} \neq 0 \text{),}\\
   \alpha_{3-\myalpha{v}} & \text{(if } \myalpha{v} \neq 0 \text{),} \\
   \alpha_{\myalpha{w}} & \text{(if } \myalpha{w} \neq 0 \text{),}\\
   \alpha_1 & \text{otherwise.}
 \end{cases}
\end{eqnarray*}

\iffalse
\begin{itemize}
  \item[(a)]{if $\myalpha{u} \neq 0$, then choose $\alpha_{3-\myalpha{u}}$.}
  \item[(b)]{if $\myalpha{v} \neq 0$, then choose $\alpha_{3- \myalpha{v}}$.}
  \item[(c)]{if $\myalpha{w} \neq 0$, then choose $\alpha_{\myalpha{w}}$.}
  \item[(d)]{if $\myalpha{u} = \myalpha{v} = \myalpha{w} = 0$, then choose $\alpha_1$ (arbitary?)}
\end{itemize}
\fi
\noindent
First, we demonstrate that the choice is well-defined.
It suffices to show it for the following four cases:

\noindent
(a) $\underline{\myalpha{u}, \myalpha{v} \neq 0, \myalpha{w} = 0}$;
  given the minimality of $s^*$,
  it follows that $\myalpha{u} = \myalpha{v}$.
Therefore, $\alpha_{3-\myalpha{u}} = \alpha_{3-\myalpha{v}}$ holds.
\\
(b) $\underline{\myalpha{u}, \myalpha{w} \neq 0, \myalpha{v} = 0}$;
  as $\OCTmap$ is an OCT-map, $\myalpha{u} \neq \myalpha{w}$ holds.
  Therefore, $\alpha_{3-\myalpha{u}} = \alpha_{\myalpha{w}}$ holds.
\\
(c) $\underline{\myalpha{v}, \myalpha{w} \neq 0, \myalpha{u} = 0}$;
  the proof is the same as that for case (b).
\\
(d) $\underline{\myalpha{u}, \myalpha{v}, \myalpha{w} \neq 0}$;
  as $\OCTmap$ is a OCT-map,
  $\myalpha{u} \neq \myalpha{w}$ 
  and $\myalpha{u} = \myalpha{v}$ holds.
Therefore, 
  $\alpha_{3-\myalpha{u}} = \alpha_{3-\myalpha{v}} = \alpha_{\myalpha{w}}$ holds.

\noindent  
Therefore, the choice is well-defined.
Next, we will show that
  the chosen $\OCTmap_{\bullet}$ is locally bipartite on $C(t_2^*, \tau(t_2^*, s-1))$.
  This follows from
    the conditions
      $\OCTmap_{\bullet}(t_2^*, s-1) \neq \OCTmap_{\bullet}(t_2^*, s-2)$
        and
      $\OCTmap_{\bullet}(t_2^*, s-1) \neq \OCTmap_{\bullet}(t_3^*, s-1)$.
As $\OCTmap_{\bullet}$ is
  locally bipartite on every rectangular cycle,
  it is an OCT-map.
      
Finally, we will show that
  the chosen $\OCTmap_{\bullet}$ is consistent on $C(t_2^*, \tau(t_2^*, s-1))$.
If it is inconsistent,
  then
  $\OCTmap_{\bullet} (t_2^*, s-1)\neq 0, \OCTmap_{\bullet} (t_3^*, s-2) \neq 0$,
  $\OCTmap_{\bullet} (t_3^*, s-1) \neq \OCTmap_{\bullet} (t_3^*, s-2)$
  must hold.
Meanwhile, $\OCTmap_{\bullet} (t_3^*, s-2) = \OCTmap_{\bullet} (t_3^*, s-2)$ must hold if $\OCTmap_{\bullet} (t_3^*, s-2) = w \neq 0$ holds.
This is a contradiction.
Therefore, $\OCTmap_{\bullet}$ is consistent on $C(t_2^*, \tau(t_2^*, s-1))$.

As a result, $\OCTmap_{\bullet}$ satisfies that
  (a) $\OCTmap_{\bullet}$ is an OCT-map,
  (b) the number of rectangular cycles where $\OCTmap_{\bullet}$ is inconsistent decreases by 1, and
  (c) $|\OCTmap^{-1}(0)|=|\OCTmap_{\bullet}^{-1}(0)|$ holds.  
Considering the procedure to
  update $\OCTmap$ by $\OCTmap_{\bullet}$ repeatedly,
  it is bounded above by
    the number of rectangular cycles where $\OCTmap$ is inconsistent.
After this finite procedure,
  we obtain
  an OCT-map $\tilde{\OCTmap}$
  that is consistent on every rectangular cycle.
The corresponding partial HA-assignment $\mathbf{Z}^{\tilde{\OCTmap}}$
  becomes consistent with $\tau$
  from Lemma~\ref{lem:mapconsistent}.
This proves the lemma.
\end{proof}

\medskip
The above proof of the lemma gives a procedure 
  to construct a desired OCT-map  $\tilde{\OCTmap}$
  satisfying that 
  (1)$|\OCTmap^{-1}(0)|=|\tilde{\OCTmap}^{-1}(0)|$ 
   and 
   (2) the partial HA-assignment 
  $\mathbf{Z}^{\tilde{\OCTmap}}$ 
  is consistent with the given timetable $\tau.$ 
We note that it is done in polynomial time.  
Now, we are ready to prove the main theorem.

\noindent
\begin{proof}[Proof of Theorem~\ref{thm:main}]
Let $Z^*$ be an optimal HA-assignment 
  of the BMP defined by $\tau.$
From Lemma~\ref{lemma:assignment_to_OCTmap}, OCT-map $\OCTmap_{Z^*}$ of $G(\tau)$ satisfies that 
  the size of the corresponding OCT 
  $|\OCTmap^{-1}_{Z^*}(0)|$ is equal to $\minB (\tau).$
Thus, the size of the minimum OCT
  of $G(\tau)$, denoted by $\minO (\tau),$
  satisfies that $\minO(\tau)\leq \minB (\tau).$

Let $\OCTmap_*$ be an OCT-map
  corresponding to the minimum OCT
  of $G(\tau).$
Lemma~\ref{lem:mapconsistent} implies that 
  there exists an OCT-map $\tilde{\OCTmap}$
  of $G(\tau)$ that satisfies 
  $|\OCTmap_*^{-1} (0)|=|\hat{\OCTmap}^{-1} (0)|$
  and the partial HA-assignment $Z^{\hat{\OCTmap}}$
  is consistent with $\tau.$
Let $\hat{Z}$ be an HA-assignment 
consistent with $\tau$, 
  which is obtained 
  from $Z^{\hat{\OCTmap}}$ by assigning 
  $\Home$ or $\Away$ to $*$-components,
  as described in Section~\ref{subsection:BMP}.
Then, the number of breaks in 
  $\hat{Z}$ is less than or equal to
  $\minO (\tau)$, which implies that
  $\minB (\tau) \leq \minO (\tau).$
\end{proof}

\medskip

%The main theorem admits
%  the direct application of the OCT problem algorithm
%  to BMP.
Agarwal, Charikar, Makarychev and Makarychev~\cite{agarwal2005log}
  proposed $O( \sqrt{\log v} )$-approximation algorithm
  for the edge bipartization problem
  defined on a graph with $v$ vertices.
The minimum edge bipartization problem
  finds the smallest edge subset $E'$
  such that the graph $G'$ obtained by deleting $E'$ is bipartite.
As the degree of the auxiliary graph $G(\tau)$ is bounded by 4,
    the approximation algorithm of the OCT problem on $G(\tau)$ can be constructed by
    the approximation algorithm of the edge bipartization problem on $G(\tau)$ with the same approximation ratio.
Consequently, we obtain the following result
  by applying it to the BMP.

\begin{CORO}
There is a randomized polynomial-time \\
$O( \sqrt{\log n} )$-approximation algorithm for BMP.
\end{CORO}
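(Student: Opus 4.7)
The plan is to combine Theorem~\ref{thm:main} with the known $O(\sqrt{\log v})$-approximation of Agarwal, Charikar, Makarychev, and Makarychev for minimum edge bipartization on a graph with $v$ vertices. By Theorem~\ref{thm:main}, together with the constructive passage from an OCT of $G(\tau)$ to a consistent partial HA-assignment supplied by Lemma~\ref{lem:mapconsistent} and the filling-in step of Section~\ref{subsection:BMP}, any OCT $V'$ of $G(\tau)$ can be converted in polynomial time into an HA-assignment whose number of breaks is at most $|V'|$. It therefore suffices to produce, in randomized polynomial time, an OCT of $G(\tau)$ of size at most $O(\sqrt{\log n}) \cdot \minO(\tau)$.

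First I would record two structural facts about $G(\tau)$. The vertex count is $|\bV| = 2n(2n-2) = \Theta(n^2)$, so $\sqrt{\log |\bV|} = O(\sqrt{\log n})$. The maximum degree is at most $4$, because a vertex $(t,s)$ has at most the two $\bE_H$-neighbours $(t,s\pm 1)$, the single $\bE_0$-neighbour $(\tau(t,s),s)$, and the single $\bE_1$-neighbour $(\tau(t,s+1),s)$. Next I would invoke the standard reduction between vertex and edge bipartization on bounded-degree graphs: any OCT $V'$ yields an edge-bipartization set of size at most $4|V'|$ by taking the edges incident to $V'$, and conversely any edge-bipartization set $E'$ yields a vertex set of size at most $|E'|$ meeting every odd cycle by choosing one endpoint of each edge of $E'$ (since an odd cycle avoiding the chosen vertices must avoid every edge of $E'$). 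Writing $\mathrm{EB}(G(\tau))$ for the optimum of edge bipartization, this gives $\minO(\tau) \leq \mathrm{EB}(G(\tau)) \leq 4\, \minO(\tau)$.

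The algorithm is then immediate: run the Agarwal--Charikar--Makarychev--Makarychev algorithm on $G(\tau)$ to obtain an edge set $E'$ with $|E'| \leq c\sqrt{\log |\bV|} \cdot \mathrm{EB}(G(\tau))$ for some absolute constant $c$; choose one endpoint of each edge of $E'$ to form an OCT $V'$; and apply the constructive procedure behind Theorem~\ref{thm:main} to convert $V'$ into an HA-assignment. Chaining the inequalities gives a break count bounded by $|V'| \leq |E'| \leq 4c\sqrt{\log |\bV|} \cdot \minO(\tau) = O(\sqrt{\log n}) \cdot \minB(\tau)$. There is no real obstacle here: the only points that need verification are minor, namely that the factor $\Delta = 4$ lost in the vertex-to-edge reduction and the factor $\sqrt{2}$ lost in passing from $\log |\bV|$ to $\log n$ are absorbed into the $O(\cdot)$, and that each step of the pipeline — constructing $G(\tau)$, invoking the edge-bipartization algorithm, selecting endpoints, and applying the repair procedure of Lemma~\ref{lem:mapconsistent} followed by the completion of $*$-entries — runs in polynomial (and randomized polynomial) time.
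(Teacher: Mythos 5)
Your proposal is correct and follows essentially the same route as the paper: invoke Theorem~\ref{thm:main}, exploit the degree bound of $4$ on $G(\tau)$ to pass between odd cycle transversals and edge bipartization with only a constant-factor loss, apply the Agarwal--Charikar--Makarychev--Makarychev algorithm, and note that $|\bV|=\Theta(n^2)$ gives $O(\sqrt{\log n})$. You simply spell out the vertex--edge reduction and the conversion back to an HA-assignment in more detail than the paper, which states these steps without proof.
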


\section{Example}
In this section,
  we present an example of a timetable with $n=4$.
Table~\ref{table:timetable_example}
  indicates an example of a timetable.
We obtain the optimal OCT-map $\OCTmap$
  by solving the OCT problem of the auxiliary graph $G(\tau)$.  
Following the integer linear programming formulation described in~\cite{huffner2009algorithm},
  we solved it
  using the Nuorium Optimizer version 26.1.0~\cite{nuopt},
  an integer linear programming solver.  
  
\begin{table}[h]
  \begin{minipage}{.45\textwidth}
    \centering
    
    \caption{Timetable of $n=4$ \label{table:timetable_example}}
    \begin{tabular}{|c|c|c|c|c|c|c|c|}
    \hline
    Slot & 1 & 2 & 3 & 4 & 5 & 6 & 7 \\ \hline \hline
    1 & 4 & 8 & 5 & 3 & 2 & 6 & 7 \\ \hline
    2 & 3 & 7 & 4 & 8 & 1 & 5 & 6 \\ \hline
    3 & 2 & 6 & 8 & 1 & 7 & 4 & 5 \\ \hline
    4 & 1 & 5 & 2 & 7 & 6 & 3 & 8 \\ \hline
    5 & 7 & 4 & 1 & 6 & 8 & 2 & 3 \\ \hline
    6 & 8 & 3 & 7 & 5 & 4 & 1 & 2 \\ \hline
    7 & 5 & 2 & 6 & 4 & 3 & 8 & 1 \\ \hline
    8 & 6 & 1 & 3 & 2 & 5 & 7 & 4 \\ \hline
    \end{tabular}
  \end{minipage}
  \begin{minipage}{.45\textwidth}
  \centering
    \label{table:example_OCTmap}
    \caption{Optimal OCT-map $\OCTmap$}
\begin{tabular}{|c|c|c|c|c|c|c|}
\hline
Slot & 1 & 2 & 3 & 4 & 5 & 6 \\ \hline \hline
1 & 2 & 1 & 2 & 1 & 2 & 1 \\ \hline
2 & 2 & 1 & 0 & 2 & 0 & 1 \\ \hline
3 & 1 & 2 & 1 & {\bf 2} & {\bf 0} & 1 \\ \hline
4 & 1 & 2 & 1 & 2 & 1 & 2 \\ \hline
5 & 2 & 0 & 1 & 2 & 1 & 2 \\ \hline
6 & 2 & 1 & 2 & 1 & 0 & 2 \\ \hline
7 & 1 & 2 & 0 & {\bf 0} & {\bf 1} & 2 \\ \hline
8 & 1 & 0 & 2 & 1 & 2 & 1 \\ \hline
\end{tabular}

  \end{minipage}
\end{table}

\noindent
We can anticipate an inconsistent rectangular cycle $C(3,7)$.
It can be recovered
  by the procedure
  described in Lemma~\ref{lem:mapconsistent}.
Table~\ref{table:example_revised_OCTmap} indicates the updated OCT-map
  $\tilde{\OCTmap}$
  and Table~\ref{table:example_HA-assignement} indicates the optimal HA-assignment obtained by the OCT-map.

\begin{table}[h]
  \begin{minipage}{.45\textwidth}
    \centering  
    \caption{Updated OCT-map $\tilde{\OCTmap}$\label{table:example_revised_OCTmap}}
\begin{tabular}{|c|c|c|c|c|c|c|}
\hline
Slot & 1 & 2 & 3 & 4 & 5 & 6 \\ \hline \hline
1 & 2 & 1 & 2 & 1 & 2 & 1 \\ \hline
2 & 2 & 1 & 0 & 2 & 0 & 1 \\ \hline
3 & 1 & 2 & 1 & {\bf 2} & {\bf 0} & 1 \\ \hline
4 & 1 & 2 & 1 & 2 & 1 & 2 \\ \hline
5 & 2 & 0 & 1 & 2 & 1 & 2 \\ \hline
6 & 2 & 1 & 2 & 1 & 0 & 2 \\ \hline
7 & 1 & 2 & 0 & {\bf 1} & {\bf 0} & 2 \\ \hline
8 & 1 & 0 & 2 & 1 & 2 & 1 \\ \hline
\end{tabular}
  
  \end{minipage}
  \begin{minipage}{.45\textwidth}
    \centering
    \caption{Optimal HA-assignment\label{table:example_HA-assignement}}
\begin{tabular}{|c|c|c|c|c|c|c|c|}
\hline
Slot & 1 & 2 & 3 & 4 & 5 & 6 & 7 \\ \hline \hline
1 & A &  H & A & H & A & H & A\\ \hline
2 & A & H & A & A & H & H & A \\ \hline
3 & H & A & H & A & H & H & A \\ \hline
4 & H & A & H & A & H & A & H \\ \hline
5 & A & H & H & A & H & A & H \\ \hline
6 & A & H & A & H & A & A & H \\ \hline
7 & H & A & A & H & A & A & H \\ \hline
8 & H & A & A & H & A & H & A \\ \hline
\end{tabular}
  \end{minipage}
\end{table}

%%%%%%%%%%%%%%%%%%%%%%%%%%%%%%%%%%%%%%%%%%%%%%%%%%%%%%%%%%%%%%%%%%%%
\footnotesize

\bibliographystyle{plain}
\bibliography{ref}

\end{document}